\newcommand{\sac}{SA-CCR\xspace}
\newcommand{\sacB}{RSA-CCR\xspace}
\newcommand{\sacLong}{standardised approach for measuring counterparty credit risk exposures\xspace}
\newcommand{\gmm}{Gaussian Market Model\xspace}
\newcommand{\hw}{\gmm}
\newcommand{\rfc}{request for comments\xspace}
\newcommand{\addon}{add-on\xspace}
\newcommand{\addons}{add-ons\xspace}
\newcommand{\Addon}{Add-on\xspace}
\newcommand{\Addons}{Add-ons\xspace}
\newcommand{\SD}{\ensuremath{\text{SD}}}
\newcommand{\ead}{\ensuremath{\text{EAD}}}
\newcommand{\rc}{\ensuremath{\text{RC}}}
\newcommand{\pfe}{\ensuremath{\text{PFE}}}
\newcommand{\eepe}{\ensuremath{\text{EEPE}}}
\newcommand{\sacm}{\ensuremath{\text{SA-CCR}}}
\newcommand{\thr}{\ensuremath{\text{TH}}}
\newcommand{\mta}{\ensuremath{\text{MTA}}}
\newcommand{\nica}{\ensuremath{\text{NICA}}}
\newcommand{\addAgg}{\ensuremath{\text{AddOn}^\text{aggregate}}}
\newcommand{\CF}{\ensuremath{\text{CF}}}
\newcommand{\addTrade}{\ensuremath{\text{AddOn}^\text{trade}}}
\newcommand{\addCashflow}{\ensuremath{\text{AddOn}^\text{cashflow}}}
\newcommand{\sfa}{\ensuremath{\text{SF}^{(a)}}}
\newcommand{\sfir}{\ensuremath{\text{SF}^{(\text{IR})}}}
\newcommand{\da}{\ensuremath{d^{(a)}}}
\newcommand{\dir}{\ensuremath{d^{(\text{IR})}}}
\newcommand{\mf}{\ensuremath{\text{MF}}}
\newcommand{\mfU}{\ensuremath{\text{MF}^\text{unmargined}}}
\newcommand{\mfM}{\ensuremath{\text{MF}^\text{margined}}}
\newcommand{\mpor}{\ensuremath{\text{MPOR}}}
\newcommand{\floor}{\ensuremath{\text{floor}}}
\newcommand{\ATM}{\ensuremath{\text{ATM}}}
\newcommand{\Float}{\ensuremath{\text{Float}}}
\newcommand{\Fixed}{\ensuremath{\text{Fixed}}}
\newcommand{\Regarb}{Dependence on economically-equivalent confirmations}
\newcommand{\regarb}{dependence on economically-equivalent confirmations}
\newcommand{\AS}{AS}
\providecommand{\tabularnewline}{\\}
\numberwithin{table}{section}
\numberwithin{equation}{section}
\theoremstyle{plain}
\newtheorem{thm}{\protect\theoremname}[section]
\theoremstyle{plain}
\theoremstyle{remark}
\theoremstyle{definition}
\newtheorem{defn}[thm]{\protect\definitionname}
\theoremstyle{plain}
\newtheorem{prop}[thm]{\protect\propositionname}
\theoremstyle{definition}
\providecommand{\assumptionname}{Assumption}
\providecommand{\definitionname}{Definition}
\providecommand{\examplename}{Example}
\providecommand{\propositionname}{Proposition}
\providecommand{\remarkname}{Remark}
\providecommand{\theoremname}{Theorem}
\begin{document}
\title{Revising \sac}
\author{Mourad Berrahoui, Othmane Islah\footnote{Contacts: Mourad.Berrahoui@lloydsbanking.com, Othmane.Islah@lloydsbanking.com.  The views expressed in this presentation are the personal views of the author and do not necessarily reflect the views or policies of current or previous employers. Not guaranteed fit for any purpose. Use at your own risk.}, Chris Kenyon\footnote{Contact: Chris.Kenyon@mufgsecurities.com.  This paper is a personal view and does not represent the views of MUFG Securities EMA plc (“MUSE”).  This paper is not advice.  Certain information contained in this presentation has been obtained or derived from third party sources and such information is believed to be correct and reliable but has not been independently verified.  Furthermore the information may not be current due to, among other things, changes in the financial markets or economic environment.  No obligation is accepted to update any such information contained in this presentation.  MUSE shall not be liable in any manner whatsoever for any consequences or loss (including but not limited to any direct, indirect or consequential loss, loss of profits and damages) arising from any reliance on or usage of this presentation and accepts no legal responsibility to any party who directly or indirectly receives this material.}}
\date{08 April 2019\vskip5mm Version 2.10\vskip 5mm Accepted for publication in this form by {\it Risk}.  Version in {\it Risk} is considerably shortened to comply with page limits, and references this version for details.}
\maketitle

\begin{abstract}
We propose revising \sac to \sacB by making SA-CCR self-consistent and appropriately risk-sensitive by cashflow decomposition in a 3-Factor \gmm.
\end{abstract}

\newpage
\tableofcontents

%=========================================================
\section{Introduction}

The \rfc\ \cite{saccrProposed} offers an opportunity to address major issues with the Basel \sacLong \sac \cite{saccr,saccrFaq} and may inform other jurisdictions \cite{baselImplementation18}.  Major issues include: lack of self-consistency for linear trades; lack of appropriate risk sensitivity (zero positions can have material \addons; moneyness is ignored); \regarb.  Medium issues include: ambiguity of risk assignment (i.e. requirement of a single primary risk factor); and lack of clear extensibility.  The issues with \sac, and the point suggestions by other authors \cite{saccrComments}, highlight appropriate principles  on which to reconstruct \sac, namely appropriate risk-sensitivity (same exposure for same economics, positive exposure for non-zero risk), transparency, consistency, and extensibility.  The scope of this proposal is non-option interest rate products where these issues are most significant and our objective is a replacement for the current SA-CCR in this area which is consistent with the existing structure and principles of SA-CCR.  

SA-CCR was designed to better address margined and unmargined trades, reflect then-recent observed volatility levels, better represent netting, and be simple to implement \cite{saccr}.   Some recent regulations and industry equivalents use sensitivities  \cite{frtb2019,isda2018simm,baseliii2011} however these are focussed on market risk (including market risk of CVA) and thus have short horizons (days to weeks).  Using sensitivities within credit risk where there is a longer horizon (up to one year) would require including prediction of their behaviour.  This is a potentially useful avenue of investigation for a complete update of SA-CCR which we do not propose here.
 
We appreciate that any standard approach needs to strike a balance between simplicity and accuracy. Too much simplicity can lead to perverse situations such as the ones we describe here. But unsophisticated Banks that cannot implement a complex approach such as IMM still need to be adequately capitalised via a simple approach. The issues in \sac seem to be due to a slight excess
in simplification. Other Banks are probably aware of some of the issues we describe here and other issues that are not mentioned here. Also the questions in the consultation \cite{saccrProposed} clearly show that the regulator is fully aware of the limitations of the current approach but requests suggestions for improvements in special cases rather than an overhaul of the approach.

Here we propose here a global solution that revises the \sac instead of a quick fix such as those in \cite{saccrFaq}. We do not think it is too late to make the standard more consistent at a marginal implementation cost (where instead of computation of trade level add-ons, a cashflow decomposition of a trade is performed and calculations of cashflow level add-ons for non-option trades) and remaining within the spirit of the rules and close to the formulae. This decomposition approach is already applied in \cite{saccr,saccrProposed} for options. A start in the model-based direction is present in \cite{saccrFoundations} which we generalize here.

From an implementation perspective, the \sacB is accurate and fully consistent, with a marginal increase in complexity to apply cashflow decomposition to non-options as well as options. It is simple enough for small Banks because the inputs required are used elsewhere and typically available as XML outputs from booking systems (projected cashflow, discount factors, cashflow fixing date/payment date).

Even if not adopted as the new Standard, we recommend at the minimum for the \sacB to become a standard control tool in other areas (such as impact of trade confirmation dependency on Pillar 1 capital calculations, adequacy of Pillar 1 capital) without the need of a more complex approach (such as IMM) to assess the capital adequacy.

Longer proofs, additional details, and derivations can be found in the accompanying Technical appendix \cite{berrahoui2019techical}.

%Booking-insensitive also brings \sac into line with \mifid's emphasis on economic equivalence \cite{mifid2} CHECK. 

%=========================================================
\section{Issues with \sac}

\begin{table}
\begin{centering}
\begin{tabular}{lp{3.9cm}p{1.3cm}p{5.5cm}}
\toprule
Impact & Issue  & Trades affected & Source  \\
\midrule
major & lack of self-consistency & linear  &  trades not decomposed to cashflows  \\
major & lack of appropriate risk sensitivity & linear  & trades not decomposed to cashflows  \\
major & \regarb      & linear   &  trades not decomposed to cashflows  \\
medium & ambiguity of risk assignment & linear & trades not decomposed to cashflows, and no explicit model \\
medium & lack of clear extensibility & non-vanilla & no explicit model \\
medium & lack of parameter transparency & linear & parameter not given in model terms\\
\bottomrule
\end{tabular}
\caption{Issues with \sac.  Major issues are those that potentially give rise to materially incorrect capital magnitude.  Medium issues potentially give rise to material mis-assignment of risk netting, e.g. curve risk being assigned to basis risk.}
\label{t:issues}
\end{centering}
\end{table}

The main issues with \sac are for linear instruments that we can illustrate using interest rate swaps (IRS). Major and medium issues are given in Table \ref{t:issues}. Major issues are those that potentially give rise to materially incorrect capital magnitude.  Medium issues potentially give rise to material mis-assignment of risk netting, e.g. curve risk being assigned to basis risk

We demonstrate that the issues with \sac can be material for simple situations and then describe \sacB.

\subsection{Reminder on \sac structure}

\begin{table}
\begin{centering}
\begin{tabular}{ll}
\bf Symbol & \bf Meaning \\ 
\toprule
\ead &  exposure at default \\
$\alpha$ & regulatory multiplier, set at $1.4$ \\
\eepe &  effective expected positive exposure (this is the IMM-CCR EEPE) \\
\rc &  replacement cost  \\
\pfe & potential future exposure (over one year)  \\ 
$C$ & haircut value of net collateral held calculated using \nica\ methodology\\
\thr & threshold \\
\mta & minimum transfer amount \\
 \nica & net independent collateral amount \\   
 \floor & regulatory floors, e.g. set at 5\%\ in Equation \ref{e:f} \\ 
 $\mf_i$ & maturity factor for trade $i$  \\
 $\da_i$ & adjusted notional for trade $i$ w.r.t. asset class $a$ \\
 $\sfa_i$ & supervisory factor for trade $i$ w.r.t. asset class $a$ \\
 \mpor & margin period of risk \\
 \midrule
 $r_t$ & short rate\\
 $a$ & mean reversion speed\\
 $\phi_t$  & deterministic mean reversion target\\
 $\sigma$ & volatility\\
 $W_t$ & Brownian motion\\
 $P(t,T)$ & price at $t$ of zero coupon bond with maturity $T$\\
 \midrule
 $\delta_{\tau}$ & day count fraction of underlying money market index\\
\bottomrule
\end{tabular}
\caption{Notation.  Horizontal lines separate notation used in different sections.}
\label{t:notation}
\end{centering}
\end{table}

We summarise salient features of \sac here for convenience.  Notation is given in Table \ref{t:notation}.
\[
\ead = \alpha \eepe = \alpha(\rc + \pfe)
\]
where for margined trades
\[
\rc=\max(V-C,\thr+\mta-\nica,0)
\]
and for unmargined trades $\thr+\mta-\nica\equiv 0$ so $\rc=\max(V-C,0)$.  Margined trades are capped at the unmargined level to allow for large $\thr+\mta$.  PFE is given by:

\begin{align}
\pfe_\sacm :=& f\left( \frac{V-C}{\addAgg} \right)  \addAgg    \nonumber    \\
f(x) :=& \min\left\{1,\  \floor + (1-\floor)\exp\left(  \frac{x}{2(1-\floor)}  \right)      \right\}  \label{e:f}
\end{align}
There is no diversification across asset classes, so \addons simply sum to create an aggregate \addon.   Trade level \addons are defined as
\[
\addTrade_i = \delta_i \sfa_i \da_i \mf_i
\]
\sfa\ captures risk factor volatility.  \da, the adjusted notional incorporates a duration measure for interest rate and credit instruments.

We add some details for IR trades as they are used below:
\begin{align*}
\dir_i &= N_i \times \SD_i\\
\SD_i &=\frac{e^{-0.05 S_i} - e^{-0.05 E_i}}{0.05}
\end{align*}
where $\SD_i$ is the statuary duration for trade $i$, and  $S_i,\ E_i$ are the start and end dates, and $N_i$ is the (time-averaged) notional.  The maturity factor for margined and un-margined cases for a trade with maturity $M_i$, in years
\begin{align*}
\mfU_i :=& \sqrt{ \min(\max(M_i,\floor),1)  }\\
\mfM :=& 1.5 \sqrt{\mpor}
\end{align*}
when all time units are in years.  The floor (10 or 20 business days) depends on the number of trades, disputes, etc.  The supervisory factor for interest rates is $0.5\%$.

Aggregation is done across the hedging set, where the hedging set is defined by the primary risk factor.  Rules for identifying the primary risk factor have been proposed in some jurisdictions \cite{saccr2017eba}.

\subsection{Lack of self-consistency for linear trades}

Economically equivalent positions have different capital requirements.  For example a vanilla IRS or a set of FRAs with the same strike have different \addons, see Table \ref{t:swaps}.  Thus \sac\ is not self-consistent.

\begin{table}[h]
	\centering
	\begin{tabular}{lr}
\toprule
      Instrument &  SA-CCR add-on \\
\midrule
        ATM Swap &      3,934,693 \\
 FRA replication &      3,433,691 \\
     Split at 3Y &      3,654,794 \\
\bottomrule
\end{tabular}

	\caption{Equivalent derivatives positions with different \addons.  ATM swap is 10Y USD, 100M notional.  Split means a 3Y swap and a separate forward starting swap (starting at 3Y for 7Y) with the same fixed rate as the ATM 10Y swap.}
	\label{t:swaps}
\end{table}

\FloatBarrier
%------------------------------------------------------------------------------------------------------
\subsection{Lack of appropriate risk sensitivity}

We illustrate two aspects here:  economically zero positions can have material \addons; and moneyness is ignored.  Ignoring moneyness is probably the more important, for example moneyness produces systematic risk for existing positions when the general level of rates changes.

To illustrate the first case consider an IRS economically hedged with a set of FRAs.  There is a non-zero add-on despite zero risk.  See Table  \ref{t:zeroaddon}. 

\begin{table}[h]
	\centering
	\begin{tabular}{lr}
\toprule
                 Instrument &  SA-CCR add-on \\
\midrule
 ATM net of FRA replication &      1,646,936 \\
\bottomrule
\end{tabular}

	\caption{Material \addon\ for zero economic position.  Position is ATM swap 10Y USD 100M notional hedged by strip of FRAs.}
	\label{t:zeroaddon}
\end{table}

Non-ATM swaps have the same regulatory \addons but can have wildly different actual EEPE (even using very simple dynamics).   See Table \ref{t:noatm}

\sac ignores the moneyness of linear derivatives although non-ATM swaps can have wildly different EEPE even using very simple dynamics. Table \ref{t:noatm} shows the EEPE versus moneyness and direction asof 5th December 2018.  EEPE was calculated using a one factor Hull-White model calibrated to \sac (mean reversion 0.05; flat volatility 0.0189).  Note that although the Supervisory Factor is 50bps this translates directly into a Hull-White volatility of 189bps.
 
 \begin{table}[h]
	\centering
	\begin{tabular}{lrr}
\toprule
                   Instrument &       Pay &      Receive \\
\midrule
                   ATM+500bps & 5,778,184 &    3,898,079 \\
                   ATM+100bps & 4,271,792 &    3,707,917 \\
                          ATM & 3,903,699 &    3,668,881 \\
                   ATM-100bps & 3,539,968 &    3,634,569 \\
                   ATM-500bps & 2,166,837 &    3,577,668 \\
 Fwd Start ATM 1y-10y +500bps & 4,712,043 &    4,891,493 \\
 Fwd Start ATM 1y-10y +100bps & 3,932,090 &    3,931,231 \\
         Fwd Start ATM 1y-10y & 3,737,350 &    3,691,418 \\
 Fwd Start ATM 1y-10y -100bps & 3,542,763 &    3,451,756 \\
 Fwd Start ATM 1y-10y -500bps & 2,766,729 &    2,495,425 \\
\bottomrule
\end{tabular}

	\caption{EEPE versus moneyness and direction calculated using one factor Hull-White model calibrated to \sac mean reversion 0.05; flat volatility 0.0189.  Under \sac the top five trades have the same \addon, as to the bottom five.   ATM swap is 10Y USD.  All instruments have USD 100M notional}
	\label{t:noatm}
\end{table}

Thus \sac\  does not have appropriate risk sensitivity since economically zero positions can have material \addons, and moneyness is ignored which can exclude material risk differences.

%\FloatBarrier
%------------------------------------------------------------------------------------------------------
\subsection{\Regarb}
 
We can construct zero \addons for any interest rate swap with maturity $T\geq$1 year where $T$ and $2T$ are in the same maturity bucket as follows:
\begin{itemize}
	\item A receiver (rate $R$) amortising swap with notional
	\(3\frac{e^{0.05\ T}}{e^{0.05\ T} - 1}\text{\ N}\) between today and
	\(T\), then notional \(\frac{e^{0.05\ T}}{e^{0.05\ T} - 1}N\) between
	\(T\) and \(2T;\)
	\item A payer (rate \(R\)) swap with maturity \(2T\) and notional
	\(2\frac{e^{0.05\ T}}{e^{0.05\ T} - 1}N\);
	\item A forward start (starting at \(T\) and maturity at \(2T\) ) receiver
	(rate R) swap with notional \(\frac{e^{0.05\ T}}{e^{0.05\ T} - 1}N;\)
	\item A payer swap (rate \(R\)) with maturity \(T\) and notional
	\(\frac{1}{e^{0.05\ T} - 1}N\) .
\end{itemize}
The adjusted notional for the first two swaps is the same, $2\frac{e^{0.05\ T}}{e^{0.05\ T} - 1}\text{\ N}$, and they are in opposite directions, so their net addon is zero.  \mf\ is the same for the third and fourth trade, 1 (as $T>1$), as is \sfir, both $0.05$.  The sum of direction and adjusted delta is:
\begin{align*}
\sum_{i=3,4} \delta_i\dir_i
=&  \frac{e^{0.05\ T}}{e^{0.05\ T} - 1}N \frac{e^{-0.05 T} - e^{-0.05 \times 2 T}}{0.05}
- \frac{1}{e^{0.05\ T} - 1}N  \frac{1  - e^{-0.05 T}}{0.05}\\
%=& \frac{1}{e^{0.05\ T} - 1}N  \left(  \frac{1  - e^{-0.05 T}}{0.05}   -  \frac{1  - e^{-0.05 T}}{0.05}               \right)\\
=&0
\end{align*}
So the net \addon\ for the third and fourth trades is also zero.  Now the net economics over all four trades is zero for $T$ to $2T$, but is a net receiver swap from $0$ to $T$ with flat notional $N(e^{0.05T} - 1)/(e^{0.05T} - 1) = N$.

Thus \sac has  \regarb.

\subsection{Medium issues}

There is only one primary risk factor for each trade.  This will mis-allocate risk netting where there are several key risk drivers, e.g. long term FX.  The lack of clarity that this produces is evident as some jurisdictions are creating additional rules to resolve the ambiguity of the single decision, e.g. \cite{saccr2017eba}.  Of course this does not solve the mis-allocation of risk netting.  As we will show below, cashflow decomposition used with an explicit model automatically produces risks in their appropriate hedging sets and buckets.  

There is a lack of transparency of the meaning of regulatory parameters in \sac.  This could be simply resolved if there was an explicit model.  For example although the interest rate Supervisory Factor for volatility is stated as 50bps this translates directly into a \hw volatility of 189bps.  Again, using an explicit model base automatically provides transparency provided the model is simple.

\FloatBarrier
%=========================================================
\section{\sacB: cashflow and model-based \sac}

To solve the issues with \sac\ we propose a cashflow and model-based \sac, which we call Revised \sac (\sacB).  We pick the model by identifying it from the \sac standard itself using the link between trade level \addons and the volatility of the present value of the trade in \cite{saccrFoundations}.  Basing \sacB on cashflows automatically solves the lack of self-consistency, lack of appropriate risk-sensitivity, \regarb.  Using a model assigns risk unambiguously, provides clear extensibility and provides parameter transparency.

%------------------------------------------------------------------------------------------------------
\subsection{Model identification}

Similarly to  \cite{saccrFoundations}, we can define the theoretical add-on at a horizon $T$ of a trade with value process $V_{t}$ as the following average expected positive exposure:
\begin{equation*}
\textrm{AddOn}_{V}(T)=\frac{1}{T}\int_{0}^{T}\mathbb{E}^{\mathbb{}}\left[\left(V(t)-V(0)\right)^{+}\right]dt 
\end{equation*}
Assuming $dV(t)= \sigma dW(t)$, then we obtain the trade level theoretical add-on at horizon
$T$  at :
\begin{equation}
\textrm{AddOn}_{V}(T)=\frac{2}{3}\sigma\sqrt{\frac{T}{2\pi}}\label{eq:TradeVolAddon}
\end{equation}
Recall the \sac trade add-ons for IR trades:
\begin{equation*}  
\addTrade=\delta  N\cdot SF^{IR} \frac{e^{-0.05 S} - e^{-0.05 E}}{0.05} \mf  
\end{equation*}
Taking a payer swap longer than one year means that $\mf=1$, and $\delta=1$. The supervisory factor $SF^{IR}=0.005$  for interest rates, so we define:
\[
\sigma_\text{HW} := \frac{3}{2}\sqrt{2\pi} SF^{IR}
\]
Consider a one factor Hull-White model  with short rate, $r(t)$ and mean reversion parameter  $a=0.05$ :
\begin{equation}
dr_t=a (\phi_t-r_t)  dt+\sigma_{HW}  dW_t             \label{e:HW}
\end{equation}
For an ATM swap  with start date $T_{s}$ and end date $T_{e}$ notional $N$, we show in Theorem \ref{th:identity} that the volatility of the present value of the swap (assuming a flat discount curve with zero rate) is:
\begin{equation*}
\sigma_{\ATM}(0)=N\frac{\sigma_{HW}}{a}(e^{-a T_{s}}-e^{-a T_{e} })
\end{equation*}
Given the relation (\ref{eq:TradeVolAddon}) between theoretical add-on and trade volatility, we obtain a model based add-on for the ATM swap, identical to the SA-CCR add-on :
\begin{align*}  
\addTrade&= N\cdot \frac{2\sigma_{HW}}{3\sqrt{2\pi}} \frac{e^{-a T_{s}} - e^{-a T_{e}}}{a}  \\
&= N\cdot  SF^{IR} \frac{e^{-a T_{s}} - e^{-a T_{e}}}{a}
\end{align*}
Thus \sac and \cite{saccrFoundations} themselves implicitly identify an appropriate model. Given that  \sac and \cite{saccrFoundations} can be interpreted as using a 1-Factor Hull-White model we extend this to a 3-factor model to take into account the correlation structure of the different maturity zones in Section \ref{s:3hw} below.   Correlations between \addons are mapped to correlations between zero-coupon bonds.

%For rates risk the supervisory factor has $\sigma(0)$ of 50 basis points so the equivalent $\sigma_\HW$ is 189 basis points.

%------------------------------------------------------------------------------------------------------
\subsection{Identity of \sac and 1-Factor Hull-White for single maturity bucket}

Here we provide the details to establish a formal correspondence of ATM Swap add-ons between \sac and HW 1 factor.    This correspondence being established, the regulatory parameters corresponding to model parameters are seen to be as in the previous section.  This process also shows how to extend \sac consistently to non-ATM trades.

A one factor Hull-White model \cite{hull1990pricing} has short rate, $r(t)$, dynamics given by Equation \ref{e:HW} above, so the dynamics of a zero coupon bond $P(t,T)$ price are:
\begin{equation}
dP(t,T)=r_t  P(t,T)  dt-\frac{\sigma}{a} (1-e^{-a(T-t)})  P(t,T)  dW_{t}     \label{e:ZCB}
\end{equation}
Consider a payer, i.e. pay-fixed rate $R$,notional $N$, swap price $V$ at $t$
\begin{align}
V(t) & =V_{float}(t)-V_{fixed}(t) \nonumber\\
 & =N\sum_{i=s+1}^{e}P(t,T_{i}) \delta_{i}  L(t,T_{i-1},T_{i})-N\sum_{i=s+1}^{e}P(t,T_{i}) \delta_{i}  R \nonumber
\end{align}
Now we have the identity between \sac and a 1-Factor Hull-White model in Theorem \ref{th:identity} below.

\begin{thm}
\label{th:identity} With zero bond dynamics
 given by Equation \ref{e:ZCB}, if $V$ is the value process
of a forward starting payer swap, then the instantaneous volatility of $V$ for $t\leq T_{s}$
can be decomposed into three contributions :
\[
\sigma_{V}(t)=\sigma_{\ATM}(t)+\sigma_{\Float}(t)+\sigma_{\Fixed}(t)
\]
Where :
\begin{align*}
\sigma_{\ATM}(t)= & N\sum_{i=s+1}^{e}P(t,T_{i-1})\frac{\sigma}{a}(e^{-a(T_{i-1}-t)}-e^{-a(T_{i}-t)})\\
\sigma_{\Float}(t)= & -N\sum_{i=s+1}^{e}P(t,T_{i})\frac{\sigma}{a}(1-e^{-a(T_{i-1}-t)})\delta_{i}L(t,T_{i-1},T_{i})\\
\sigma_{\Fixed}(t)= & N\sum_{i=s+1}^{e}P(t,T_{i})\frac{\sigma}{a}(1-e^{-a(T_{i}-t)})\delta_{i}R
\end{align*}
If at time $t$, the swap is ATM, taking the standard weight-freezing assumption (see proof) we have:
\begin{equation*}
\sigma_{V}(t) = N\sum_{i=s+1}^{e}P(t,T_{i})\frac{\sigma}{a}(e^{-a(T_{i-1}-t)}-e^{-a(T_{i}-t)})=\sigma_{\ATM}(t)
\end{equation*}
Moreover if the yield curve is flat and equal to zero, then the instantaneous
volatility of an ATM swap in one maturity bucket is given exactly by the \sac regulatory
formula i.e.:
\begin{equation}
\sigma_{\ATM}(t)=N\frac{\sigma}{a}(e^{-a(T_{s}-t)}-e^{-a(T_{e}-t)})\nonumber
\end{equation}
\label{th:1f}
\end{thm}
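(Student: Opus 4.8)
The plan is to read off the instantaneous volatility as the coefficient of $dW_t$ in the It\^o differential $dV(t)$, exploiting the fact that the single driver $W_t$ makes each bond's diffusion term linear in the loading $-\frac{\sigma}{a}(1-e^{-a(T_i-t)})P(t,T_i)$ supplied by Equation \ref{e:ZCB}. First I would write the payer swap as $V(t)=V_{float}(t)-V_{fixed}(t)$ and differentiate each leg separately, so that $\sigma_V(t)$ is the sum of the two legs' $dW_t$-loadings.

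For the fixed leg this is immediate: $V_{fixed}(t)/N = R\sum_i \delta_i P(t,T_i)$ is a weighted sum of bonds, and substituting the ZCB loading reproduces $\sigma_{\Fixed}(t)$ term by term. The floating leg is the crux. I would use the forward-LIBOR identity $\delta_i L(t,T_{i-1},T_i)=P(t,T_{i-1})/P(t,T_i)-1$, equivalently $P(t,T_i)\delta_i L(t,T_{i-1},T_i)=P(t,T_{i-1})-P(t,T_i)$, and apply the product rule to each floating payment $P(t,T_i)\delta_i L(t,T_{i-1},T_i)$. Differentiating the forward rate $L(t,T_{i-1},T_i)$ --- whose volatility comes from the ratio $P(t,T_{i-1})/P(t,T_i)$ and hence from the difference of the two adjacent bond loadings $\frac{\sigma}{a}(e^{-a(T_{i-1}-t)}-e^{-a(T_i-t)})$ --- produces the $\sigma_{\ATM}(t)$ loading, while differentiating the discount factor $P(t,T_i)$ with $L(t,T_{i-1},T_i)$ held as weight produces the $\sigma_{\Float}(t)$ loading. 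Summing the three groups over $i$ then gives the stated decomposition.

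The main obstacle is precisely this attribution of the floating-leg volatility: one must cleanly separate the rate-level sensitivity (which I expect to coincide with $\sigma_{\ATM}(t)$, carrying the bond $P(t,T_{i-1})$ at the reset date) from the residual discounting sensitivity (which I expect to give $\sigma_{\Float}(t)$, carrying the loading at $T_{i-1}$), and this requires the standard freezing of the slowly varying bond weight. The two natural groupings of the telescoped float differential differ only by a second-order-in-tenor cross term $\propto (e^{-a(T_{i-1}-t)}-e^{-a(T_i-t)})(P(t,T_{i-1})-P(t,T_i))$, which the weight-freezing convention discards; getting the exponents $T_{i-1}$ versus $T_i$ to land on the advertised terms is the one place where care, and the explicit freezing, is needed.

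For the \ATM\ collapse I would impose the par condition, which is exactly $V(t)=0$, i.e.\ $\sum_i \delta_i P(t,T_i)(L(t,T_{i-1},T_i)-R)=0$. Replacing $(1-e^{-a(T_{i-1}-t)})$ and $(1-e^{-a(T_i-t)})$ by a common frozen weight across the leg, the combination $\sigma_{\Float}(t)+\sigma_{\Fixed}(t)$ becomes proportional to $\sum_i \delta_i P(t,T_i)(R-L(t,T_{i-1},T_i))=-V(t)/N=0$, leaving $\sigma_V(t)=\sigma_{\ATM}(t)$. Finally, setting the zero rate to zero makes $P(t,T)\equiv 1$ in the already-frozen loading, so $\sigma_{\ATM}(t)=N\frac{\sigma}{a}\sum_{i=s+1}^{e}(e^{-a(T_{i-1}-t)}-e^{-a(T_i-t)})$ telescopes to $N\frac{\sigma}{a}(e^{-a(T_s-t)}-e^{-a(T_e-t)})$, the \sac\ regulatory form.
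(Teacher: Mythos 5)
Your proposal is correct and follows essentially the same route as the paper's proof: It\^o differentiation of the two legs under the zero-coupon-bond dynamics of Equation \ref{e:ZCB}, the forward-LIBOR identity $P(t,T_i)\delta_i L(t,T_{i-1},T_i)=P(t,T_{i-1})-P(t,T_i)$ to split the floating-leg loading into a rate-level piece ($\sigma_{\ATM}$, weight $P(t,T_{i-1})$) and a discounting piece ($\sigma_{\Float}$), leg-wide weight freezing combined with the par condition $\sum_i\delta_i P(t,T_i)\left(L(t,T_{i-1},T_i)-R\right)=0$ for the ATM collapse, and telescoping under $P\equiv 1$ to recover the \sac regulatory formula. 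You also correctly isolate the one genuinely delicate point: the stated $T_{i-1}$-versus-$T_i$ exponent placement in $\sigma_{\ATM}$ and $\sigma_{\Float}$ mixes the two natural groupings of the telescoped float differential, and the decomposition therefore holds only up to the second-order-in-tenor cross term $\propto\left(e^{-a(T_{i-1}-t)}-e^{-a(T_i-t)}\right)\left(P(t,T_{i-1})-P(t,T_i)\right)$ that the weight-freezing convention discards.
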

See Technical appendix \cite{berrahoui2019techical} for proof.

The swap volatility can be decomposed into two contributions: 
\begin{itemize}
\item volatility of the floating rate index itself: $\sigma_{ATM}(t)$
\item volatility of the present value of the cash-flows: $\sigma_{float}(t)+\sigma_{fixed}(t)$
\end{itemize}
Since for \sac only exposures below one year are relevant we freeze the volatility $\sigma_{V}(t)$ and use its
initial value $\sigma_{V}(0)$ so
\begin{align*}
\sigma_{V}(t) & \approx N \sum_{i=s+1}^{e}P(0,T_{i-1}) \frac{\sigma}{a} (e^{-aT_{i-1}}-e^{-aT_{i})})\\
 & +N \sum_{i=s+1}^{e}\delta_{i}P(0,T_{i}) \frac{\sigma}{a} (1-e^{-aT_{i}})\left(R-L(0,T_{i-1},T_{i})\right)
\end{align*}

\FloatBarrier
%------------------------------------------------------------------------------------------------------
\subsection{Matching inter-bucket correlations: 3-factor \hw}
\label{s:3hw}

\begin{table}
\centering
\begin{tabular}{cccc}
%\hline 
 & $T<1$ & $1\leq T<5$ & $5 \leq T$\tabularnewline
\hline 
%\hline 
$T<1$ & 1 & $\rho_{1}$ & $\rho_{2}$\tabularnewline
%\hline 
$1\leq T<5$ & $\rho_{1}$ & 1 & $\rho_{1}$\tabularnewline
%\hline 
$5\leq T$ & $\rho_{2}$ & $\rho_{1}$ & 1\tabularnewline
\hline 
\end{tabular}
\caption{\label{t:corr}Correlation structure of \sac 
interest rates add-ons by instrument maturity $T$, where \hbox{$\rho_{1}=0.7$} and $\rho_{2}=0.3$.}
\end{table}

Theorem \ref{th:identity} shows that the regulatory add-on formula for an at-the-money swap can be recovered from a 1-factor Hull-White model.  To match the inter-bucket correlations of \sac, Table \ref{t:corr}, we naturally move to a 3-factor \gmm:
\begin{itemize}
	\item We use maturity buckets defined by \sac: $M_{1}=[0,1)$, $M_{2}=[1,5)$, $M_{3}=[5,\infty)$
	\item  We also define  $M(t)=\sum_{i}1_{t\in M_{i}}M_{i}$ a map from positive numbers
	to the set of maturity buckets
	\item Next we define the correlated Brownian motions $Z_{t}^{M_{k}}$, $k=1,2,3$  with SA-CCR correlation structure i.e. 
	$dZ_{t}^{M_{1}}dZ_{t}^{M_{2}}=dZ_{t}^{M_{2}}dZ_{t}^{M_{3}}=0.7$
	and:
	$ dZ_{t}^{M_{1}}dZ_{t}^{M_{3}}=0.3$
\end{itemize}

 We define the following 3-factor \gmm extension of the Hull-White model that allows to recover the inter-bucket correlations for zero coupon bonds (with expiries $T$ in a discrete set ${T_{1},..,T_{N}}$ covering all business days up to horizon $T_{N}$) i.e :
\[
dP(t,T)=r_{t}P(t,T)\cdot dt-\frac{\sigma}{a}(1-e^{-a\cdot(T-t)})\cdot P(t,T)\cdot dZ_{t}^{M(T)}
\]

Where $r_{t}$ is the risk free rate (define as the continuously compounded rate of the shortest maturity zero-coupon bond).
Now we have a model identified from \sac we can move on to the second key element of \sacB, cashflow decomposition for linear products.

\subsection{Cashflow decomposition for linear products}

We first define the scope, concentrating on common linear product cashflows (which we call elementary cashflows) in Definition \ref{d:cf} and then provide Theorem \ref{thm:elementary cf addons} that gives the \addons for each of the different cashflow types.  In Section \ref{s:float}  we give an example derivation of the entries in Table 
\ref{t:CFaddon} for a Floating cashflow (with deterministic and stochastic basis).  Full derivation are in the accompanying Technical appendix \cite{berrahoui2019techical}.   In Section \ref{s:numerical} we compare the performance of \sac and \sacB.

From the examples here and in \cite{berrahoui2019techical}, the \addons for other (less common) cashflows in linear products can be derived in a similar manner.

\begin{defn}
	An elementary cashflow with payment date at $T$ and projected
	value $CF(T)$ (viewed from $0$) that is either the projected value
	of:
	\label{d:cf}
\end{defn}

\begin{itemize}
	\item a fixed cashflow $CF(T)=N$;
	\item a standard interest rates floating cashflow (as defined in point 2)
	, with notional $N$, money market index $L(t,t+\delta(\tau))$ tenor
	$\tau$ , fixing date $0<T_{f}\leq T$ defined as :
	\[
	\delta_{\tau} \  N \  L(T_{f},T_{f}+\delta_{\tau})
	\]
	\item a standard CMS cashflow with notional $N$ and swap rate $S_{\tau}(T_{s},T_{e})$
	fixing at $0<T_{s}\leq T$, with underlying money market tenor $\tau$,
	maturity at $T_{e}$, swap tenor $\delta(T_{s},T_{e})$, defined as
	:
\end{itemize}
\[
\delta_{\tau} \  N \  S_{\tau}(T_{s},T_{e})
\]

\begin{itemize}
	\item a standard inflation floating cashflow, with notional $N,$defined
	on inflation index $I(t),$with initial observation date $T_{s}$
	(can be in the past ), next observation date $\tau_{I}$ and final
	observation date $T_{e}\leq T$:
	\[
	N \ \left(\frac{I(T_{e})}{I(T_{s})}-1\right)
	\]
	\item a standard inflation compound cashflow with notional $N,$with initial
	observation date $T_{s}$ (can be in the past ), next observation
	date $\tau_{I}$ and final observation $T_{e}\leq T$:
	\[
	N \ \frac{I(T_{e})}{I(T_{s})}
	\]
	\item a standard interest rates compound cashflow , with start of compounding
	at $T_{s}$(cn be in the past i.e. , next reset date $\tau_{C}$ ,
	end of last compounding $T_{e}$, underlying index $L(t,t+\delta_{\tau})$,
	with tenor $\tau$. This cashflow is given by : 
	\begin{align*}
	N \ \prod_{k=s}^{e-1}(1+\delta_{\tau} \  L(T_{k},T_{k}+\tau)
	\end{align*}
\end{itemize}

\begin{thm}
	\label{thm:elementary cf addons}All elementary cashflows
	received (respectively paid) at time $T$, should have an add-on contribution using the supervisory formula
	\[
	\addCashflow = \delta \sfir \dir \mf  
	\]
	 to the maturity bucket corresponding
	to the payment date itself $T$ with follwing inputs to use in the
	supervisory formula:
	\begin{itemize}
		\item a delta equal to $-1$ (respectively $1$ ) ,
		\item start date of $0$ and end date of $T$ to compute the supervisory
		duration
		\item an adjusted notional equal to present value of the cashflow
	\end{itemize}
	Moreover, all non-fixed standard cashflows will have other add-ons
	contributions due to the volatility of their respective underlying
	indices, calculated using the supervisory formulae (with inputs as
	per Table \ref{t:CFaddon}) and allocated to the appropriate maturity
	buckets and hedging sets (as stated in Table \ref{t:CFaddon} for received cash-flows).
	
\begin{table}
\begin{adjustwidth}{-3.3cm}{-3.3cm}
	\begin{tabular}{|>{\raggedright}m{1.76cm}|>{\raggedright}m{1.55cm}|>{\centering}p{2.33cm}|>{\centering}p{4.3cm}|>{\centering}p{2.65cm}|>{\centering}p{2.68cm}|r@{\extracolsep{0pt}.}l|}
		\hline 
		Cashflow & Hedging set & Maturity bucket & Effective Notional & Start & End & \multicolumn{2}{c|}{Delta }\tabularnewline
		\hline 
		\hline 
		\multirow{2}{1.76cm}{Floating} & \multirow{2}{1.55cm}{Rates, Basis } & $M(T_{f}$) & $(N+CF(T)) \  P(0,T)$ & $0$ & $T_{f}$ & \multicolumn{2}{c|}{$-1$}\tabularnewline
		\cline{3-8} 
		&  & $M(T_{f}+\tau$) & $(N+CF(T)) \  P(0,T)$ & $0$ & $T_{f}+\tau$ & \multicolumn{2}{c|}{$+1$}\tabularnewline
		\hline 
		\multirow{2}{1.76cm}{Compound} & \multirow{2}{1.55cm}{Rates , Basis} & $M(\max(\tau_{c},T_{s}))$ & $CF(T) \  P(0,T)$ & 0 & $\max(\tau_{c},T_{s})$ & \multicolumn{2}{c|}{$-1$}\tabularnewline
		\cline{3-8} 
		&  & $M(T_{e})$ & $CF(T) \  P(0,T)$ & $0$ & $T_{e}$ & \multicolumn{2}{c|}{$+1$}\tabularnewline
		\hline 
		\multirow{3}{1.76cm}{CMS } & \multirow{3}{1.55cm}{Rates, Basis } & $M_{1}$ & $\frac{\delta_{\tau}}{\delta(T_{s},T_{e})}(N+CF(T)) \  P(0,T)$ & $\min(1,T_{s})$ & $\min(1,T_{e})$ & \multicolumn{2}{c|}{$-1$}\tabularnewline
		\cline{3-8} 
		&  & $M_{2}$ & $\frac{\delta_{\tau}}{\delta(T_{s},T_{e})}(N+CF(T)) \  P(0,T)$ & $\min(\max(1,T_{s}),5)$ & $\min(\max(1,T_{e}),5)$ & \multicolumn{2}{c|}{$+1$}\tabularnewline
		\cline{3-8} 
		&  & $M_{3}$ & $\frac{\delta_{\tau}}{\delta(T_{s},T_{e})}(N+CF(T)) \  P(0,T)$ & $\max(5,T_{s})$ & $\max(5,T_{e})$ & \multicolumn{2}{c|}{$+1$}\tabularnewline
		\hline 
		\multirow{2}{1.76cm}{Inflation Floating} & \multirow{2}{1.55cm}{Inflation} & $M(\max(\tau_{I},T_{s}))$ & $(N+CF(T)) \  P(0,T)$ & $0$ & $M(\max(\tau_{I},T_{s}))$ & \multicolumn{2}{c|}{$-1$}\tabularnewline
		\cline{3-8} 
		&  & $T_{e}$ & $(N+CF(T)) \  P(0,T)$ & $0$ & $T_{e}$ & \multicolumn{2}{c|}{$+1$}\tabularnewline
		\hline 
		\multirow{2}{1.76cm}{Inflation Compound} & \multirow{2}{1.55cm}{Inflation} & $M(\max(\tau_{I},T_{s}))$ & $CF(T) \  P(0,T)$ & $0$ & $M(\max(\tau_{I},T_{s}))$ & \multicolumn{2}{c|}{$-1$}\tabularnewline
		\cline{3-8} 
		&  & $T_{e}$ & $CF(T) \  P(0,T)$ & $0$ & $T_{e}$ & \multicolumn{2}{c|}{$+1$}\tabularnewline
		\hline 
	\end{tabular}
\caption{\Addons components for the elementary cashflows listed in Definition \ref{d:cf}.  Hedging sets for rates cashflows are Rates and Basis per currency.  Hedging sets for inflation cashflows are inflation per currency.  Derivations of these values follow the Floating  example in Section \ref{s:float} and details are given in the accompanying Technical appendix, including Floating and CMS examples in full. \label{t:CFaddon}}
\end{adjustwidth}
\end{table}
	With $CF(T)$ is the projected cash-flow, the map $M(t)$ that associates
	to $t$ its risk bucket and $P(0,T)$ the discount factor. Basis risk
	add-ons are allocated to the money market index $\tau$ vs discount
	hedging set.
	
	Setting $P(0,T)=1$ would result in simplified add-ons in line with
	current regulatory stipulations 
	\end{thm}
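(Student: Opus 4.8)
The plan is to reduce every entry of Theorem~\ref{thm:elementary cf addons} and Table~\ref{t:CFaddon} to a single computation: the instantaneous volatility of the present-value process of each elementary cashflow in the 3-factor \hw of Section~\ref{s:3hw}, followed by the add-on/volatility dictionary already established for the ATM swap. Concretely, the identification in Theorem~\ref{th:identity} together with Equation~\ref{eq:TradeVolAddon} says that a trade whose present value has frozen instantaneous volatility $\mathcal{N}\,\frac{\sigma}{a}\bigl(e^{-a s}-e^{-a e}\bigr)$ on a single Brownian factor $Z^{M(S)}$ carries a supervisory add-on $\mathcal{N}\cdot\sfir\cdot\frac{e^{-a s}-e^{-a e}}{a}$, i.e.\ an entry with effective notional $\mathcal{N}$, supervisory-duration endpoints $s$ and $e$, unit delta with the sign of the coefficient, allocated to the maturity bucket $M(S)$ carrying that factor. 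So the whole theorem follows once I read off, factor by factor, the frozen diffusion coefficient of each cashflow's value.

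First I would set up the common machinery. For a cashflow paid at $T$ with projected value $CF_t$ (stochastic through its fixings), its value process is $V(t)=P(t,T)\,CF_t$. Itô on the zero-coupon dynamics~\ref{e:ZCB} gives the diffusion of $V$ as a sum of terms, one for each distinct bond maturity $S$ entering $V$, each carrying the log-volatility loading $\mp\frac{\sigma}{a}\bigl(1-e^{-a(S-t)}\bigr)$ on $Z^{M(S)}$ (minus for a numerator bond, plus for a denominator bond). Two observations finish the reduction. The discount factor $P(t,T)$ is common to every cashflow and contributes the loading $-\frac{\sigma}{a}(1-e^{-a(T-t)})$ times $V(t)$; freezing at $t=0$ and writing $1-e^{-aT}=e^{-a\cdot 0}-e^{-aT}$ gives exactly the first, ``payment-date'' contribution of the theorem---effective notional equal to the present value $P(0,T)CF(T)$, start $0$, end $T$, bucket $M(T)$, delta $-1$ for a received cashflow. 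Every other bond $P(t,S)$ sitting inside $CF_t$ likewise contributes, after freezing, a term $\pm\mathcal{N}\frac{\sigma}{a}(e^{-a\cdot0}-e^{-aS})$, i.e.\ an index add-on with start $0$ and end $S$ in bucket $M(S)$; this is the source of the ``other'' contributions in Table~\ref{t:CFaddon}.

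With this in hand the individual rows are routine. For a \Fixed\ cashflow $CF_t\equiv N$ is deterministic, so only the discount term survives, matching the claim that fixed cashflows carry no index add-on. For a \Float\ cashflow I would use the single-curve identity $N\,P(t,T_f)/P(t,T_f+\tau)=N+CF_t$, so the two index bonds $P(t,T_f)$ and $P(t,T_f+\tau)$ contribute frozen coefficients $\mp(N+CF(T))P(0,T)\frac{\sigma}{a}(1-e^{-aT_f})$ and $\pm(N+CF(T))P(0,T)\frac{\sigma}{a}(1-e^{-a(T_f+\tau)})$, reproducing the effective notional $(N+CF(T))P(0,T)$, endpoints $(0,T_f)$ and $(0,T_f+\tau)$, buckets $M(T_f)$ and $M(T_f+\tau)$, and deltas $-1,+1$; the splitting of this sensitivity between the Rates and Basis hedging sets is exactly the index-versus-discount decomposition worked through in Section~\ref{s:float}. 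The telescoping identity $\prod_k\bigl(1+\delta_\tau L(T_k,T_k+\tau)\bigr)=P(t,T_s)/P(t,T_e)$ collapses the compound cashflow to the same two-bond form, giving buckets $M(\max(\tau_c,T_s))$ and $M(T_e)$; the inflation rows are identical with the bond ratio replaced by the inflation index ratio $I(T_e)/I(T_s)$ modelled analogously. Setting $P(0,T)=1$ merely drops the discounting prefactor, which is the stated regulatory simplification.

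The hard part will be the CMS row, where the payoff depends on the par swap rate $S_\tau(t,T_s,T_e)=(P(t,T_s)-P(t,T_e))/A(t)$ and its volatility does not collapse onto a single bond maturity. Here I would invoke the same weight-freezing assumption used in Theorem~\ref{th:identity}: freeze the annuity weights so that $S_\tau$ behaves like the ATM-swap rate, whose frozen volatility is proportional to $\frac{\sigma}{a}(e^{-aT_s}-e^{-aT_e})$. The remaining work---and the genuine obstacle---is to split this single supervisory duration across the three maturity buckets, using $e^{-aT_s}-e^{-aT_e}=\sum_{k=1}^{3}\bigl(e^{-a\,s_k}-e^{-a\,e_k}\bigr)$ with the clipped endpoints $s_k,e_k$ given by the $\min/\max$ expressions of Table~\ref{t:CFaddon}, and to carry the tenor-rescaling factor $\delta_\tau/\delta(T_s,T_e)$ that converts the swap-rate sensitivity into the sensitivity of the single CMS payment. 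Controlling the quality of the weight-freezing and vol-freezing approximations (valid because only sub-one-year exposures matter, as noted after Theorem~\ref{th:identity}) is what makes this step more delicate than the mechanical factor-by-factor reading that settles the remaining rows.
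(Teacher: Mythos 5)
Your core machinery is exactly the paper's: write each elementary cashflow's value as $V(t)=P(t,T)\,CF_t$, apply Ito's Lemma under the 3-factor \hw dynamics of Section \ref{s:3hw}, freeze coefficients at $t=0$, and convert each surviving diffusion coefficient into a supervisory entry (effective notional, start/end dates for the duration, bucket $M(\cdot)$, signed unit delta) via the volatility/add-on dictionary of Theorem \ref{th:identity} and Equation \ref{eq:TradeVolAddon}. For the Fixed, Floating, Compound and Inflation rows this reproduces the paper's own derivation (the worked Floating example of Section \ref{s:float}, with the remaining rows deferred to the technical appendix), including the identity $N\,P(t,T_f)/P(t,T_f+\tau)=N+CF_t$, the telescoping of the compounding product into a two-bond ratio, and the index-versus-discount split that populates the Basis hedging set. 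That part of your proposal is correct and is essentially the paper's argument.

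The CMS row, however, is a genuine gap, and not merely a ``delicate approximation'' as you frame it. Two concrete problems. First, in the 3-factor model the allocation of risk to maturity buckets is dictated by which Brownian $Z_t^{M(S)}$ each zero-coupon bond loads on; it cannot be imposed afterwards by algebraically splitting the scalar duration $e^{-aT_s}-e^{-aT_e}$ with clipped endpoints. Under your proposed full weight-freezing, the swap rate $S=(P(t,T_s)-P(t,T_e))/A(t)$ loads only on $Z_t^{M(T_s)}$ and $Z_t^{M(T_e)}$, giving two contributions with durations over $(0,T_s)$ and $(0,T_e)$ and nothing at all in any intermediate bucket, so no telescoping rearrangement of that output can produce the three clipped-span rows of Table \ref{t:CFaddon}. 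Second, any telescoped split of a single frozen sensitivity necessarily assigns the \emph{same} delta to every bucket, whereas the table's CMS deltas are $-1,+1,+1$; mixed signs are a signature of contributions you have frozen away. Keeping the annuity stochastic, Ito gives $dS$ a term $[dP(t,T_s)-dP(t,T_e)]/A(t)$ plus a term $-S\,dA(t)/A(t)$: the short-end bond enters with a negative loading, while $P(t,T_e)$ and every annuity bond $P(t,T_i)$, $T_i\in(T_s,T_e]$, enter with positive loadings spread over exactly the buckets that $[T_s,T_e]$ crosses; bucket-by-bucket approximation of those annuity sums is what generates clipped start/end dates and the prefactor $\delta_\tau/\delta(T_s,T_e)$. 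So the CMS entries require this unfrozen computation factor by factor; redistributing the frozen one-factor volatility by hand is a step that would fail.
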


In \sacB \addon contributions for linear products are the aggregate add-on of the \addons for each cashflow .  This removes the \sac issues above: lack of self-consistency; lack of appropriate risk sensitivity; \regarb.  We formalize \addon calculations via cashflow decomposition in the definitions below.

\begin{defn}
A linear product has no optionality and is made of elementary cash-flows
$\CF_{i}$ with payment dates $T_{i}$ ($i=1,..,n)$ and notional $N_{i}$.
\end{defn}

\begin{defn}
The total \addon of a portfolio of linear products is obtained from
the aggregation of the individual \addons of each cash-flow
that makes up the linear products. \Addon aggregation
is performed according to \sac.
\end{defn}

\begin{prop}
All portfolios of linear products resulting in the same net cash-flows have the same aggregate \addon under \sacB.\label{p:decomp}
\end{prop}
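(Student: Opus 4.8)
The plan is to show that the aggregate add-on is a deterministic function of a finite collection of signed per-bucket, per-hedging-set quantities, and that each of these quantities is a \emph{linear} functional of the net cashflow profile. Consistency then follows because two portfolios with the same net cashflows induce identical net profiles and hence identical add-ons.

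First I would formalise the phrase ``same net cash-flows'' by introducing the notion of an elementary cashflow \emph{slot}: a tuple consisting of the cashflow type (fixed, floating, CMS, inflation floating/compound, compound) together with all of its defining dates and tenors (payment date $T$, fixing/observation dates, tenor $\tau$, etc.) as in Definition \ref{d:cf}. Two elementary cashflows lie in the same slot iff these data coincide; cashflows in a common slot differ only by their notional, and netting two such cashflows yields a single cashflow in that slot whose notional is the signed sum. Two portfolios ``resulting in the same net cash-flows'' are then precisely two portfolios whose canonical decompositions into elementary cashflows induce the same net-notional profile $s \mapsto n_s$ over all slots $s$.

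Next I would invoke Theorem \ref{thm:elementary cf addons} and Table \ref{t:CFaddon} to record that the add-on contribution of a single elementary cashflow is linear in its notional. Concretely, the slot alone fixes the delta ($\pm 1$), the supervisory factor $\sfir$, the supervisory duration (through the start/end dates), and the maturity factor $\mf$, while the effective notional is proportional to the cashflow notional: the discount factor $P(0,T)$ and, for index-linked cashflows, the projected value $CF(T)$ and the $\delta_\tau/\delta(T_s,T_e)$ weightings are all functions of the slot data that scale linearly with notional. Consequently each signed bucket sum $D_{k,h}$ --- the net effective notional allocated to maturity bucket $k$ of hedging set $h$, obtained per \sac by summing the signed cashflow-level contributions --- is a linear functional of the net profile, $D_{k,h}=\sum_{s} n_s\, c_{k,h}(s)$, where the per-unit coefficient $c_{k,h}(s)$ is determined entirely by the slot $s$.

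Finally I would observe that, by the \sac aggregation rules, the aggregate add-on is a fixed deterministic function of the collection $\{D_{k,h}\}$ alone (within a hedging set the three bucket quantities are combined by the quadratic correlation formula built from $\rho_1,\rho_2$ of Table \ref{t:corr}, and hedging sets are summed with no cross-class diversification). Since each $D_{k,h}$ depends only on the net-notional profile, and two portfolios with the same net cash-flows share that profile, they share every $D_{k,h}$ and therefore the same aggregate add-on. The main obstacle I anticipate is not this final aggregation step, which is purely formal, but the careful verification of the linearity claim for the index-linked cashflows: one must check that splitting or recombining floating, CMS, and inflation cashflows leaves each of their (possibly several) bucket contributions --- including the common discounting contribution with delta $-1$, start $0$, end $T$ --- unchanged under netting, i.e.\ that $CF(T)$ genuinely scales with the defining notional so that the map $n \mapsto \{D_{k,h}\}$ is exactly linear rather than merely affine.
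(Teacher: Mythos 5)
Your proof is correct and takes essentially the same route as the paper: the paper's own proof is just the one-line observation that ``by construction the \addons depend only on the net cash-flows,'' and your argument is a careful unpacking of exactly that construction (slot-wise netting, linearity of each signed bucket contribution in the notional, and aggregation depending only on the signed bucket sums per hedging set). The linearity point you flag as a remaining obstacle is immediate from Definition \ref{d:cf}, since every projected value $CF(T)$ listed there is proportional to the defining notional $N$, so each effective notional in Table \ref{t:CFaddon} scales exactly, not merely affinely, with $N$.
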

\begin{proof}
By construction the \addons depend only on the net cash-flows.
\end{proof}
Propostion \ref{p:decomp} is the property we are after: this ensure appropriate risk sensitivity (no risk for no economic position and vice versa) and no \regarb.

%=========================================================
\section{Example \addon  derivation for a xibor coupon}
\label{s:float}

Here we provide an example of how to derive the entries in Table \ref{t:CFaddon} for a floating coupon for deterministic and stochastic index-vs-discounting basis.

The floating rate cash-flow is based on a money market index with fixing
at $T_{f}$ , tenor $\tau$ (less than $1$ year) and payment
at $T\geq\tau$ , with notional $N$. 

 We ignore convexity so this setup includes timing mismatches
such as a Libor rate paid in advance.

The forward value of the index is 
\[
\delta_{\tau}L(t,T_{f},T_{f}+\tau)=P_{f}(t,T_{f})/P_{f}(t,T_{f}+\tau)-1
\]
We define the index-vs-discounting basis $P_b$ via
\[
P_{f}(t,s)=P_{b}(t,s)  P(t,s)
\]
Ignoring convexity the coupon value is
\[
V(t)=N  P(t,T)(P_{f}(t,T_{f})/P_{f}(t,T_{f}+\tau)-1)
\]
Applying Ito's Lemma to $V(t)$ we see that the instantaneous volatility of $V(t)$ has three
components, one  from the floating rate index volatility, another
from the cashflow's present value and a third  from
the basis risk volatility:
\begin{align*}
dV_{t} & =(..)dt+N [\delta_{\tau}L(t,T_{f},T_{f}+\tau)  dP(t,T)\\
 & +P(t,T)P_{b}(t,T_{f})/P_{f}(t,T_{f}+\tau)  dP(t,T_{f})+P(t,T)  P_{f}(t,T_{f})/P_{b}(t,T_{f}+\tau)  d(1/P(t,T_{f}+\tau))\\
 & +P(t,T)  P(t,T_{f})/P_{f}(t,T_{f}+\tau)  dP_{b}(t,T_{f})+P(t,T)  P_{f}(t,T_{f})/P(t,T_{f}+\tau)  d(1/P_{b}(t,T_{f}+\tau))]
\end{align*}
No $dP_*dP_*$ terms are present above as these result in drift ($dt$) contributions not volatility ($dW_*$) contributions.

\subsection{Floating coupon: deterministic basis}
Assume first that the basis curve is deterministic, then we have for
$P(t,T)$, $P(t,T_{f})$ and $P(t,T_{f}+\tau)$ the following dynamics from the 3-factor \hw:
\begin{align*}
dP(t,T)=&(..)  dt-\frac{\sigma}{a}(1-e^{-a(T-t)})P(t,T)dZ_{t}^{M(T)}  \\
dP(t,T_{f})=&(..).dt-\frac{\sigma}{a}(1-e^{-a(T-t)})P(t,T)dZ_{t}^{M(T_{f})} \\
d(\frac{1}{P(t,T_{f}+\tau)})=&(..).dt+\frac{\sigma}{a}(1-e^{-a(T_{f}+\tau-t)}) \frac{1}{P(t,T_{f}+\tau)}dZ_{t}^{M(T_{f}+\tau)}
\end{align*}
Hence
\begin{align}
dV_{t} & =r_{t}V_{t}dt-\frac{\sigma}{a}(1-e^{-a(T-t)})  P(t,T)  N \delta_{\tau}L(t,T_{f},T_{f}+\tau)dZ_{t}^{M(T)}\nonumber \\
 & -\frac{\sigma}{a}(1-e^{-a(T_{f}-t)})  P(t,T)  N (1+\delta_{\tau}L(t,T_{f},T_{f}+\tau))  dZ_{t}^{M(T_{f})}  \label{e:flt}\\
 & +\frac{\sigma}{a}(1-e^{-a(T_{f}+\tau-t)})  P(t,T)  N (1+\delta_{\tau}L(t,T_{f},T_{f}+\tau))  dZ_{t}^{M(T_{f}+\tau)}\nonumber 
\end{align}
From Equation \ref{e:flt} we see that the floating rate cash-flow \addons bucket contributions are:
\begin{enumerate}
\item contribution from the present value of the cash-flow:
\begin{itemize}
\item Maturity bucket $M(T)$
\item Notional: $N  P(0,T) \tau  L(0,T_{f},T_{f}+\tau)$
\item Duration: $\frac{1}{a} (1-e^{-aT})$
\item Delta : $-1$
\end{itemize}
\item index volatility: fixing date contribution
\begin{itemize}
\item Maturity bucket $M(T_{f})$
\item Effective Notional: $N  P(0,T) (1+\delta_{\tau}  L(0,T_{f},T_{f}+\tau))$
\item Duration: $\frac{1}{a} (1-e^{-aT_{f}})$
\item Delta : $-1$
\end{itemize}
\item index volatility: payment date contribution
\begin{itemize}
\item Maturity bucket $M(T_{f}+\tau)$
\item Effective Notional: $N  P(0,T) (1+\tau  L(0,T_{f},T_{f}+\tau))$
\item Supervisory Duration: $\frac{1}{a} (1-e^{-a (T_{f}+\tau)})$
\item Delta : $1$
\end{itemize}
\end{enumerate}

\subsection{Floating coupon: stochastic basis}
If we assume a stochastic basis spread curve (some implementation scenarios for Libor decommissioning fallback may make this irrelevant) as
\[
dP_{b}(t,T_{f})=(..).dt-\frac{\sigma_{b}}{a}(1-e^{-a(T_{f}-t)})P_{b}(t,T)dZ_{t,b}^{M(T_{f})}
\]
\[
d\left(\frac{1}{P_{b}(t,T_{f}+\tau)}\right)=(..).dt+\frac{\sigma_{b}}{a}(1-e^{-a(T_{f}+\tau-t)}) \frac{1}{P_{b}(t,T_{f}+\tau)}dZ_{t.b}^{M(T_{f}+\tau)}
\]
The volatility of $V(t)$ adds basis contributions
\begin{align*}
dV_{t} =&(..)dt + (..)dZ_{t}^{M(T)} + (..) dZ_{t}^{M(T_{f})} + (..) dZ_{t}^{M(T_{f}+\tau)}\\
 & {}-\frac{\sigma_{b}}{a}(1-e^{-a(T_{f}-t)})  P(t,T)  N (1+\delta_{\tau}L(t,T_{f},T_{f}+\tau))  dZ_{t,b}^{M(T_{f})}\\
 & {}+\frac{\sigma_{b}}{a}(1-e^{-a(T_{f}+\tau-t)})  P(t,T)  N (1+\delta_{\tau}L(t,T_{f},T_{f}+\tau))  dZ_{t.b}^{M(T_{f}+\tau)}
\end{align*}
This gives additional contributions to two buckets, one for the fixing date
and another for the fixing date shifted by the underlying tenor:
\begin{enumerate}
\setcounter{enumi}{3}
\item fixing date:
\begin{itemize}
\item Maturity bucket $M(T_{f})$
\item Effective Notional: $N  P(0,T) (1+\delta_{\tau}  L(0,T_{f},T_{f}+\tau))$
\item Duration: $\frac{1}{a} (1-e^{-aT_{f}})$
\item Delta : $-1$
\end{itemize}
\item payment date of the underlying money market index :
\begin{itemize}
\item Maturity bucket $M(T_{f}+\tau)$
\item Effective Notional: $N  P(0,T) (1+\tau  L(0,T_{f},T_{f}+\tau))$
\item Supervisory Duration: $\frac{1}{a} (1-e^{-a (T_{f}+\tau)})$
\item Delta : $1$
\end{itemize}
\end{enumerate}

%=========================================================
\section{\sacB performance versus \sac}\label{s:numerical}

Here we compare \sacB \addons versus \sac \addons, and versus a simulation  of \gmm, and provide reference comparison to a 1-Factor Hull-White for completeness.  Both \sacB and \sac can be considered approximations to the simulation results.  These examples demonstrate that \sacB has appropriate risk sensitivity as compared to \sac, and showing results of ambiguity resolution in \sacB, i.e. it is clear what to do for zero coupon swaps.  Of course, by construction \sacB does not suffer from \regarb.
\begin{itemize}
\item Vanilla Swaps with different moneyness, Table \ref{fig:moneyness}.  \sac is insensitive to moneyness, showing a 56\%\ range of error w.r.t. the simulation, whereas \sacB error range is 2\%.
\begin{table}[h]
\begin{adjustwidth}{-2cm}{-2cm}
	\centering
	\begin{tabular}{p{2.6cm}p{1.9cm}p{1.9cm}p{1.9cm}p{1.9cm}p{1.9cm}}
\toprule
  Instrument & SA-CCR & RSA-CCR with no discounting & RSA-CCR with market discount & HW1F Average Pay/Receive Add-ons & Shifted LMM 3F Average Pay/Receive Add-on \\
\midrule
    ATM Swap &     4\% &                          3\% &                          -5\% &                               0\% &                                 3,783,285 \\
 ATM+100 bps &    -1\% &                          3\% &                          -5\% &                               0\% &                                 3,984,416 \\
  ATM+500bps &   -19\% &                          1\% &                          -6\% &                               0\% &                                 4,829,860 \\
  ATM-100bps &    10\% &                          3\% &                          -4\% &                              -0\% &                                 3,588,130 \\
  ATM-500bps &    37\% &                          1\% &                          -6\% &                              -0\% &                                 2,875,001 \\
\bottomrule
\end{tabular}

	\caption{Vanilla Swaps with varying moneyness, percentage differences from \gmm in last column.  The ATM swap is referred to below as \AS.}
	\label{fig:moneyness}
\end{adjustwidth}
\end{table}

\item Vanilla swaps with different replications, Table \ref{fig:replications}.  Here \sac's maximum error w.r.t. simulation is infinite as it gives non-zero risk for a zero economic position.  In contrast \sacB has a 5\%\ range or errors, and maximum error of -5\%.
\begin{table}[h]
\begin{adjustwidth}{-2cm}{-2cm}
	\centering
	\begin{tabular}{p{2.6cm}p{1.9cm}p{1.9cm}p{1.9cm}p{1.9cm}p{1.9cm}}
\toprule
      Instrument & SA-CCR & RSA-CCR with no discounting & RSA-CCR with market discount & HW1F Average Pay/Receive Add-ons & Shifted LMM 3F Average Pay/Receive Add-on \\
\midrule
        ATM Swap &     4\% &                          3\% &                          -5\% &                               0\% &                                 3,783,285 \\
 FRA Replication &    -9\% &                          3\% &                          -5\% &                               0\% &                                 3,783,285 \\
     ATM -- FRAs &   inf\% &                          0\% &                           0\% &                               0\% &                                         0 \\
     split at 3Y &    -3\% &                          3\% &                          -5\% &                               0\% &                                 3,783,285 \\
\bottomrule
\end{tabular}

	\caption{Vanilla Swaps with different replications, percentage differences from HW1F in last column.}
	\label{fig:replications}
\end{adjustwidth}
\end{table}

\item Amortising/Accreting vanilla swaps vs combination of swaps,  Table \ref{fig:amortising}.  Here the \sac range of error is 98\%\ (in one case it gives zero \addon for non-zero risk).  The range of error for \sacB is 28\%\ with no discounting or 14\%\ with market discounting versus the simulation.
\begin{table}[h]
\begin{adjustwidth}{-2cm}{-2cm}
	\centering
	\begin{tabular}{p{5.2cm}p{1.9cm}p{1.9cm}p{1.9cm}p{1.9cm}p{1.9cm}}
\toprule
                          Instrument & SA-CCR &  Full approach with discount equal 1 & RSA-CCR with market discount & HW1F Average Pay/Receive Add-ons & Shifted LMM 3F Average Pay/Receive Add-on \\
\midrule
      Amortising 10Y:5Y/100M;5Y/200M &    -2\% &                                   1\% &                          -5\% &                              -2\% &                                 5,996,321 \\
                     FRA replication &    -8\% &                                   1\% &                          -5\% &                              -2\% &                                 5,996,321 \\
 5Y/200M and forward start 5Y5Y/100M &    -4\% &                                   1\% &                          -5\% &                              -2\% &                                 5,996,321 \\
                10Y/100M and 5Y/100M &    -5\% &                                   1\% &                          -5\% &                              -2\% &                                 5,996,321 \\
           Amortising minus 10Y/150M &  -100\% &                                 -29\% &                         -17\% &                             -25\% &                                   352,986 \\
\bottomrule
\end{tabular}

	\caption{Amortising Swaps, percentage differences from \gmm in last column.}
	\label{fig:amortising}
\end{adjustwidth}
\end{table}

\item Zero coupon swaps,  Table \ref{fig:zero coupons}.  The range of error of \sac is 99\%\ in contrast to \sacB with 3\%.
\begin{table}[h]
\begin{adjustwidth}{-2.3cm}{-2.3cm}
	\centering
	\begin{tabular}{p{5.2cm}p{1.9cm}p{1.9cm}p{1.9cm}p{1.9cm}p{1.9cm}}
\toprule
                   Instrument & SA-CCR & Full approach with discount equal 1  & RSA-CCR with market discount & HW1F Average Pay/Receive Add-ons & Shifted LMM 3F Average Pay/Receive Add-on \\
\midrule
         Zero Coupon ATM  10y &    -2\% &                                  15\% &                          -3\% &                              -1\% &                                 3,999,402 \\
 Zero Coupon ATM+100bps 10 yr &    -9\% &                                  15\% &                          -3\% &                              -1\% &                                 4,341,650 \\
 Zero Coupon ATM+500bps 10 yr &   -27\% &                                  15\% &                          -2\% &                              -3\% &                                 5,368,783 \\
 Zero Coupon ATM-100bps 10 yr &     8\% &                                  15\% &                          -3\% &                              -1\% &                                 3,657,255 \\
 Zero Coupon ATM-500bps 10 yr &    72\% &                                  14\% &                          -3\% &                              -4\% &                                 2,290,554 \\
\bottomrule
\end{tabular}

	\caption{Zero Coupon Swaps, percentage differences from \gmm in last column.}
	\label{fig:zero coupons}
\end{adjustwidth}
\end{table}

\item Forward starts,  Table \ref{fig:forward start swaps}.  The range of error of \sac is 68\%\ in contrast to \sacB with 1\%.
\begin{table}[h]
\begin{adjustwidth}{-2cm}{-2cm}
	\centering
	\begin{tabular}{p{5.2cm}p{1.9cm}p{1.9cm}p{1.9cm}p{1.9cm}p{1.9cm}}
\toprule
                   Instrument & SA-CCR & RSA-CCR with no discounting & RSA-CCR with market discount & HW1F Average Pay/Receive Add-ons & Shifted LMM 3F Average Pay/Receive Add-on \\
\midrule
         Fwd Start ATM 1y-10y &     2\% &                          6\% &                          -3\% &                               1\% &                                 3,668,070 \\
 Fwd Start ATM 1y-10y +100bps &    -4\% &                          6\% &                          -3\% &                               1\% &                                 3,888,496 \\
 Fwd Start ATM 1y-10y +500bps &   -22\% &                          7\% &                          -3\% &                               1\% &                                 4,772,075 \\
 Fwd Start ATM 1y-10y -100bps &     9\% &                          6\% &                          -3\% &                               1\% &                                 3,447,875 \\
 Fwd Start ATM 1y-10y -500bps &    46\% &                          6\% &                          -2\% &                               2\% &                                 2,570,157 \\
\bottomrule
\end{tabular}

	\caption{Forward Staring Swaps, percentage differences from \gmm in last column.}
	\label{fig:forward start swaps}
\end{adjustwidth}
\end{table}

\end{itemize}

\FloatBarrier
%=========================================================
\section{Conclusions}

\sac has major issues including: lack of self-consistency for linear trades; lack of appropriate risk sensitivity (zero positions can have material \addons; moneyness is ignored); \regarb.   We have shown that \sac is, by parameter identification and re-construction, based on a 3-factor \hw model.    Hence we propose \sacB based on cashflow decomposition and this 3-factor \hw model.   \sacB is both free of \sac's issues, simple to use in practice, and can be extended easily given that it is model-based.    We recommend updating \sac to \sacB in order to resolve \sac's issues of lack of self-consistency for linear trades, lack of appropriate risk sensitivity (zero positions can have material \addons; moneyness is ignored), \regarb, and ambiguity of application for cases not explicitly described.

For a consistent treatment of the FX risk factors within the RSA-CCR cash-flow decomposition approach, it is enough to consider only the currency pairs with respect to the domestic currency i.e. CCY/USD as hedging sets. The add-on contributions to the FX asset class (hedging set CCY/USD) for an elementary linear cashflow in the currency CCY (regardless of the product type) is simply the present value of the cash-flow converted to the domestic currency USD. Moreover, each cashflow has an interest rate add-on contribution as computed for each elementary cashflow by the RSA-CCR approach.  

We recommend updating \sac to \sacB in order to resolve \sac's issues of lack of self-consistency for linear trades, lack of appropriate risk sensitivity (zero positions can have material \addons; moneyness is ignored), \regarb, and ambiguity of application for cases not explicitly described.

%=========================================================
\appendix
\section{Parametrization of \gmm simulated using shifted libor market model}

The \gmm is simulated using a 3-Factor Shifted Libor Market Model where the calibration is derived from the \gmm setup.  Volatility and shift calibration is shown in Table \ref{t:calib}.

\begin{table}
\begin{center}
\includegraphics[width=1.00\textwidth,trim=0cm 5cm 0cm 0cm,clip]{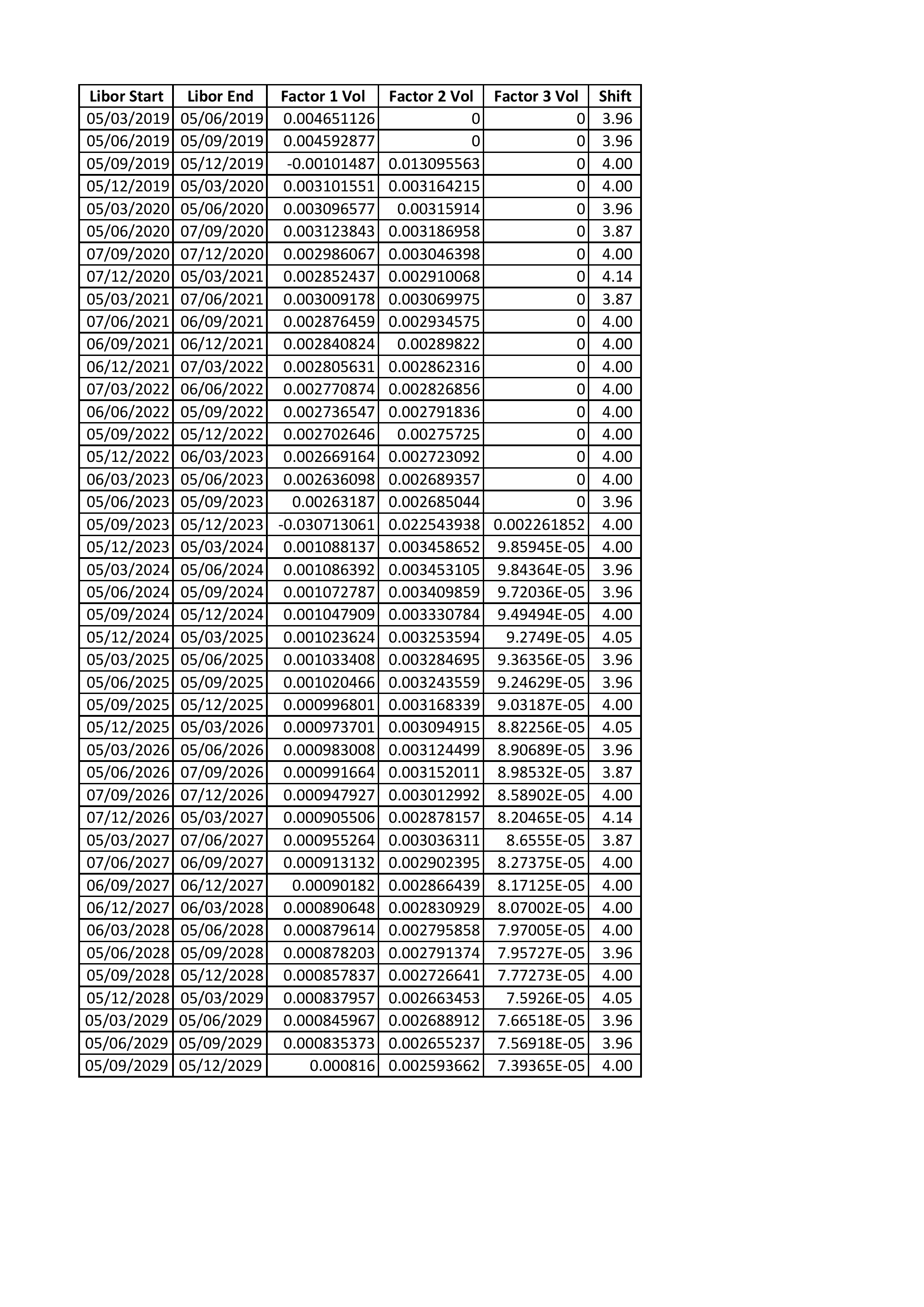}
\end{center}
\caption{ 3-Factor Shifted Libor Market Model calibration modeling \gmm.\label{t:calib}}
\end{table}

%=========================================================
\section{Acknowledgements}

The authors would like to gratefully acknowledge useful discussions with Lee McGinty and Crina Manolescu.

\bibliographystyle{chicago}
\bibliography{bibSACCR}

\end{document}